\theoremstyle{plain}
\newtheorem{thm}{Theorem}[section]
\newtheorem*{thm*}{Theorem}
\theoremstyle{definition}
\theoremstyle{remark}
\newtheorem*{con*}{Conjecture}
\title{A game-theoretic probability approach to loopholes in CHSH experiments}
\author[1]{Takara Nomura}
\author[2]{Koichi Yamagata}
\author[3]{Akio Fujiwara}
\affil[1]{Department of Mathematics, The University of Osaka, Toyonaka, Osaka, 560-0043, Japan (u004883d@ecs.osaka-u.ac.jp)}
\affil[2]{Institute of Science and Engineering, Kanazawa University, Kanazawa, Ishikawa, 920-1192, Japan (yamagata@se.kanazawa-u.ac.jp)}
\affil[3]{Department of Mathematics, The University of Osaka, Toyonaka, Osaka, 560-0043, Japan (fujiwara@math.sci.osaka-u.ac.jp)}
\date{}
\begin{document} 
\maketitle

\begin{abstract}
We study the CHSH inequality from an informational, timing-sensitive viewpoint using game-theoretic probability, which avoids assuming an underlying probability space. 
The locality loophole and the measurement-dependence (``freedom-of-choice'') loophole are reformulated as structural constraints in a sequential hidden-variable game between Scientists and Nature.
We construct a loopholes-closed game with capital processes that test (i) convergence of empirical conditional frequencies to the CHSH correlations and (ii) the absence of systematic correlations between measurement settings and Nature's hidden-variable assignments, and prove that Nature cannot satisfy both simultaneously: at least one capital process must diverge.
This yields an operational winning strategy for Scientists and a game-theoretic probabilistic interpretation of experimentally observed CHSH violations.

\bigskip\noindent
{\bf Keywords}: CHSH inequality; Bell tests; measurement independence; locality loophole; measurement-dependence loophole; game-theoretic probability; sequential betting games
\end{abstract}

\section{Introduction}\label{sec:Introduction}

Bell's inequality was formulated in the context of the EPR paradox \cite{EPR_paradox}, and it characterizes correlations that are compatible with local hidden-variable models (see, e.g. Bell's seminal work \cite{Bell_EPR} and comprehensive reviews \cite{Brunner2014}).
Among its variants, the CHSH inequality \cite{CHSH} has played a central role.
It has been extensively studied in foundations of quantum mechanics and subjected to thorough experimental tests \cite{Aspect1981, Aspect1982, Weihs}.

In certain experimental configurations, however, it cannot be ruled out that a system may appear to violate the CHSH inequality even though it follows a hidden-variable model.
Such possibilities are known as ``loopholes'' and the identification and the control of loopholes in CHSH-type experiments have been a topic of sustained theoretical and experimental interest \cite{Bell_theorem_encyclo,Kaiser_trackling_loopholes}.

This paper focuses on two representative loopholes, the locality loophole and the measurement-dependence loophole, also known as the freedom-of-choice loophole.
The locality loophole refers to the possibility that the outcomes at spatially separated measurement sites may influence one another.
This loophole is addressed experimentally by performing measurements at sufficiently distant locations, where the measurement results do not exert an immediate influence.
The measurement-dependence loophole refers to the possibility that the choice of measurement settings is statistically correlated with a hidden variable.
Experimentally, this loophole is considered closed by selecting the measurement settings ``freely'' or ``randomly,'' for example using physical random number generators after particle generation \cite{Violation_local_realism_freedom_of_choice}.
Modern ``loophole-free'' Bell tests that simultaneously aim to close these loopholes have reported violations of the CHSH inequality consistent with quantum-mechanical predictions \cite{Hensen, Giustina}.

While most analyses of the CHSH inequality rely on probabilistic or measure-theoretic formulations, comparatively little attention has been paid to how the availability and timing of information themselves constrain possible correlations.
The objective of this paper is to shed new light on the conditions under which the CHSH inequality holds or is violated, from this informational perspective.
To this end, this paper employs the framework of game-theoretic probability theory \cite{itsonlyagame,Shafer-Vovk_Foundations}.

Game-theoretic probability theory models statistical behavior as arising from a betting game, without introducing an underlying probability space.
Within this framework, players act sequentially, so that one player may choose a move after observing the other's.
This asymmetry in information and timing plays a central role in the emergence of statistical behavior.
This explicit representation of these structural features enables us to reconsider the assumptions underlying the CHSH inequality from an alternative perspective.

In this paper, we formalize the locality loophole and the measurement-dependence loophole as structural conditions in games between Scientists and Nature inspired by hidden-variable scenarios.
We then introduce a loopholes-closed game equipped with two capital processes.
One is designed to test whether empirical conditional frequencies of outcomes are consistent with the CHSH correlations.
The other probes whether, across trials, the chosen measurement settings exhibit any systematic dependence on Nature's outcome assignments.
Importantly, this second test is formulated in terms of empirical frequencies, providing an operational non-probabilistic analogue of the usual measurement-independence assumption.

Our main result shows that these two requirements cannot be satisfied simultaneously within this game-theoretic framework.
This incompatibility leads to a formulation in which violations of the CHSH inequality can be understood operationally, in terms of sequential betting strategies, once the relevant structural conditions are enforced.

This paper is organized as follows.
Section 2 introduces a fundamental result from game-theoretic probability theory.
Section 3 is devoted to preliminaries for translating the CHSH setting into the framework of game-theoretic probability theory.
Section 4 presents the main results, clarifying how the loopholes are incorporated into the game-theoretic framework and how these structural conditions affect the validity or violation of the CHSH inequality.
For improved readability, the detailed proofs are deferred to Appendix A.
Section 5 provides concluding remarks.
Appendix~B provides a brief summary of the CHSH inequality for convenience.

\section{Game-theoretic probability theory: a minimal background}

Let $\Omega := \{1, 2, \dots,A\}$ be a finite alphabet, and let $\Omega^n$ and $\Omega^\ast$ denote the sets of sequences over $\Omega$ of length $n$ and finite length.
An element of $\Omega^n$ is represented symbolically as $\omega^n$.
Let us introduce the following set: 
\begin{displaymath}
    \mathcal{P}(\Omega):=\left\{ p:\Omega\to (0,1) \left| \; \sum_{\omega\in\Omega} p(\omega)=1 \right.\right\}.
\end{displaymath}
Given a $p\in\mathcal{P}(\Omega)$, consider the following game.

\medskip
\begin{itembox}[l]{\bf Simple predictive game} \label{game:predictive_by_Shafer_and_Vovk}
\textbf{Players} : Skeptic and Reality. \\
\textbf{Protocol} : $K_0 = 1$. \\
\hspace*{0.5em}FOR $n \in \mathbb{Z}_{>0}$ : \\
\hspace*{1em}Skeptic announces $q_n \in \mathcal{P}(\Omega)$. \\
\hspace*{1em}Reality announces $\omega_n \in \Omega$. \\
\hspace*{1em}$\displaystyle K_n := K_{n-1} \cdot \frac{q_n(\omega_n)}{p(\omega_n)} $. \\
\hspace*{0.5em}END FOR.
\end{itembox}
\medskip

This protocol can be understood as a betting game in which Skeptic predicts Reality's ``stochastic'' move, regarding $p(a)$ as the ``probability'' of the occurrence of $a\in\Omega$.
These quantities are multiplicative update factors for Skeptic's capital, and we therefore regard them as the betting odds.
Further, $q_n$ and $K_n$ denote Skeptic's bet and capital at step $n$, respectively, with the recursion formula specifying how the capital evolves.
Note that, for any $n$, $K_n > 0$ from the recursion formula of Skeptic's capital.
Since $q_n$ can depend on Reality's past move $\omega_1^{n-1} = \omega_1 \omega_2 \cdots \omega_{n-1} \in \Omega^{n-1}$, we identify Skeptic's strategy $\{q_n\}_n$ with a map $q: \Omega^\ast \rightarrow \mathcal{P}(\Omega)$ as $q_n(a) := (q(\omega_1^{n-1})) (a).$

Apparently, this game is in favor of Reality because Reality announces $\omega_n$ {\it after} knowing Skeptic's bet $q_n$, preventing Skeptic from becoming rich.
However, Shafer and Vovk \cite{itsonlyagame} showed the following surprising result. 

\begin{thm} [Game-theoretic law of large numbers] \label{SLLN:predictive-Shafer_and_Vovk}
In the simple predictive game, Skeptic has a strategy $q : \Omega^\ast \rightarrow \mathcal{P}(\Omega)$ that ensures $\lim_{n \to \infty} K_n = \infty$ unless
\begin{displaymath}
    \lim_{n \to \infty} \frac{1}{n}\sum_{i=1}^n \delta_a (\omega_i) = p(a)
\end{displaymath}
for all $a \in \Omega$, where $\delta_a$ denotes the Kronecker delta.
\end{thm}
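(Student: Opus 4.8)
The plan is to pass to the logarithm of Skeptic's capital and exploit the information-theoretic geometry hidden in the multiplicative update. For the strategy that bets a fixed $q\in\mathcal{P}(\Omega)$ at every step, the capital telescopes to $K_n^q=\prod_{i=1}^n q(\omega_i)/p(\omega_i)$, so that $\frac1n\log K_n^q=\sum_{a\in\Omega}\hat p_n(a)\log\frac{q(a)}{p(a)}$, where $\hat p_n(a):=\frac1n\sum_{i=1}^n\delta_a(\omega_i)$ is the empirical distribution. The decisive observation is that when $q$ is taken close to the limiting empirical distribution, this exponent becomes (nearly) a Kullback--Leibler divergence $D(\,\cdot\,\|p)$, which is strictly positive away from $p$.

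First I would establish a mixing lemma: if $\{K_n^{(k)}\}_k$ are capital processes of strategies with initial capitals $\mu_k>0$ and $\sum_k\mu_k=1$, then their sum $\sum_k\mu_k K_n^{(k)}$ is again the capital process of a single valid strategy $\Omega^\ast\to\mathcal{P}(\Omega)$. This follows directly from the recursion, since the aggregate factor $\tilde q_n(\omega_n)=\bigl(\sum_k K_{n-1}^{(k)}q_n^{(k)}(\omega_n)\bigr)\big/\bigl(\sum_k K_{n-1}^{(k)}\bigr)$ is a convex combination of the $q_n^{(k)}$, hence lies in $\mathcal{P}(\Omega)$, and depends only on the past. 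Fixing a countable dense set $\{q^{(k)}\}_k\subset\mathcal{P}(\Omega)$ with weights $\mu_k:=2^{-k}$, I obtain one strategy whose capital dominates $\mu_k K_n^{q^{(k)}}$ for every $k$.

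Next I would show this mixture forces $\sup_n K_n=\infty$ whenever the frequencies fail to converge. If $\hat p_n\not\to p$, some coordinate fails to converge, so by compactness of the simplex there is a subsequence $\hat p_{n_j}\to\hat p^\ast\neq p$. Because $\sum_a\hat p^\ast(a)\log\frac{q(a)}{p(a)}\to D(\hat p^\ast\|p)>0$ as $q\to\hat p^\ast$ — restricting the sum to the support of $\hat p^\ast$ keeps every term finite and avoids the boundary of the simplex — density lets me select a $q^{(k)}$ with $\sum_a\hat p^\ast(a)\log\frac{q^{(k)}(a)}{p(a)}>0$. Along $n_j$ the exponent $\frac1{n_j}\log K_{n_j}^{q^{(k)}}$ converges to this positive value, so $K_{n_j}^{q^{(k)}}\to\infty$ and hence the mixture is unbounded.

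The main obstacle is that the foregoing yields only $\sup_n K_n=\infty$: the deviation may be intermittent, so the fixed-$q$ capital can repeatedly fall back, whereas the statement demands the full limit $\lim_n K_n=\infty$. To upgrade I would use a reserving construction: run countably many scaled copies of the unbounded-capital strategy, the $m$-th starting from capital $2^{-m}$ and freezing (betting $p$, which multiplies capital by $1$) the first time its own capital reaches $1$. On the bad event each copy eventually freezes with capital $\ge1$; the number of frozen copies is nondecreasing and tends to infinity, and the total capital bounds this count from below, giving $\lim_n K_n=\infty$. The two places needing care are precisely this upgrade and the simplex-boundary issue in the divergence estimate; the remainder is bookkeeping. (As an alternative to the mixture one may use the Laplace-smoothed plug-in bet $q_n(a)=\frac{N_{n-1}(a)+1}{n-1+A}$, whose capital satisfies $\frac1n\log K_n=D(\hat p_n\|p)-O(\frac{\log n}{n})$ by a Dirichlet-integral computation, again producing $\sup_n K_n=\infty$ before the same upgrade.)
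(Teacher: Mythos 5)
Your proof is correct, but there is nothing in the paper to compare it against: the paper does not prove Theorem~\ref{SLLN:predictive-Shafer_and_Vovk} at all, deferring instead to the Shafer--Vovk textbooks and to the cited coding-theoretic reference. Your argument is a sound, self-contained reconstruction assembled from exactly the standard toolkit of that literature: (i) closure of capital processes under countable convex mixtures (your verification that the aggregate bet is a capital-weighted average in $\mathcal{P}(\Omega)$ is right; the only detail worth stating is that $\sum_k K_{n-1}^{(k)}<\infty$ at each step, which holds because every one-step factor is bounded by $1/\min_a p(a)$); (ii) the identification of $\frac1n\log K_n^q$ with an empirical cross-entropy difference, which near a cluster point $\hat p^\ast\neq p$ of the empirical distributions is bounded below by a positive fraction of $D(\hat p^\ast\|p)$ --- your handling of the simplex boundary is adequate, since strict positivity of $p$ makes $D(\hat p^\ast\|p)$ finite and strictly positive even when $\hat p^\ast$ has zero coordinates; and (iii) the freeze-and-count construction upgrading weak forcing ($\sup_n K_n=\infty$) to forcing ($\lim_n K_n=\infty$), which is precisely Shafer and Vovk's lemma that weak forcing implies forcing. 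You correctly identified (ii) and (iii) as the two delicate points, and both are resolved properly: the count of frozen copies is nondecreasing, tends to infinity on every bad path, and lower-bounds the total capital. Your alternative route via the Laplace-smoothed plug-in bet, with $\frac1n\log K_n=D(\hat p_n\|p)-O\left(\frac{\log n}{n}\right)$, is essentially the information-theoretic perspective the paper cites, and it is also the same Stirling/Kullback--Leibler computation that the paper itself carries out for the second capital process $W_n^{AB}$ in the proof of Theorem~\ref{thm:CHSH_holds}; either route, followed by your upgrade in (iii), completes the proof.
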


For a proof, we refer the reader to the original textbook \cite{itsonlyagame,Shafer-Vovk_Foundations}. 
A more information-theoretic perspective is discussed in \cite{coding_study}.

The theorem implies that there exists a betting strategy $q_n$ that guarantees Skeptic becomes infinitely rich if Reality's moves violate the ``law of large numbers.''
\footnote{If Reality's objective is to prevent Skeptic from becoming infinitely rich, then Theorem~\ref{SLLN:predictive-Shafer_and_Vovk} can be rephrased as follows: {\it Skeptic can force the event $\lim_{n \to \infty} \frac{1}{n}\sum_{i=1}^n \delta_a (\omega_i) = p(a)$}, which corresponds to the original statement by Shafer and Vovk \cite{itsonlyagame}.
}
Note that Skeptic's capital may still diverge even while the empirical frequencies of Reality's moves converge; the two phenomena are not mutually exclusive \cite{coding_study}.

In protocols appearing in game-theoretic probability theory, it is assumed that agents declaring later can utilize the moves of players who declared earlier.
This timing convention will be adopted throughout the paper.
In the following sections, we reinterpret Skeptic as Scientist and Reality as Nature, in order to emphasize the correspondence with experimental settings.

\section{From CHSH scenario to a game-theoretic framework}

In this section, we explain the physical and game-theoretic situations we study.
Scientist A and Scientist B are located at spatially separated sites.
Each receives a particle, chooses one of two measurement settings, and performs the corresponding 
measurement; Nature then produces an outcome in $\Omega := \{+1, -1\}$ for each party.

In preparation for a game-theoretic interpretation of the CHSH inequality, we introduce a hidden-variable-style representation in which outcomes for all four measurement settings are preassigned.
Let $\Lambda$ be a set, and for each $s \in \{1, 2, 3, 4 \}$, let $X_{s}:  \Lambda \rightarrow \Omega$ be a function.
We assume that Nature fixes the functions $X_1,\dots, X_4$ prior to the game protocol and that they are disclosed to all participants before the game begins.
This differs from the usual hidden-variable narrative, where the hidden variable $\lambda \in \Lambda$ is not accessible to the experimenters;  
however, this difference is inessential for the arguments below, since we will not exploit any informational advantage from knowing $\lambda$ itself.

In the standard CHSH formulation, $\Lambda$ is equipped with a probability measure
and each $X_s$ is a random variable.
Here, by contrast, we impose no probabilistic structure: $\Lambda$ is merely a set and each $X_s$ is simply a function.
Our only structural assumption is that the map
\[ \lambda \mapsto (X_1(\lambda), X_2(\lambda), X_3(\lambda), X_4(\lambda)) \]
is surjective onto $\Omega^4$; equivalently, for every $(\omega^1, \omega^2,\omega^3, \omega^4 ) \in \Omega^4$, there exists $\lambda \in \Lambda$ such that $X_s (\lambda) = \omega^s$
for all $s\in\{1,2,3,4\}$.

Let
\[
\Theta := \Theta^A \times \Theta^B := \left\{ 1, 2 \right\} \times \left\{ 3, 4 \right\}.
\]
Here, $\Theta^A=\{1,2\}$ and $\Theta^B=\{3,4\}$ denote the sets of measurement settings available to Scientist A and Scientist B, respectively.
The product $\Theta=\Theta^A\times\Theta^B$ therefore represents the set of all joint measurement settings $(\theta^A, \theta^B)$ that can occur in each measurement.

We model Scientists and Nature as players in a game-theoretic setting, who place bets on the outcomes produced by Nature under these measurement choices.
In this setting, the betting odds depends on the measurement settings and the outcomes generated by Nature.
Table~\ref{table:CHSH_odds} specifies the betting odds assigned to each experimental outcome for each pair of measurement settings; we call it the {\it CHSH table}.
The parameters \(\mu := (2- \sqrt{2})/8\) and \(\nu := (2+ \sqrt{2})/8\) are chosen so as to match the quantum-mechanical prediction for the standard CHSH configuration achieving the maximal violation (the Tsirelson bound; see~\cite{Tsirelson}).

For example, suppose Scientists A and B choose the measurement setting pair \((\theta^A,\theta^B)=(1,4)\). 
Then the odds assigned to the outcome \((\omega^A,\omega^B)=(+1,-1)\) are \(\mu\); we write \(p(+1,-1\mid 1,4)=\mu\).

Although the CHSH table may be read as a table of ``conditional probabilities,'' this probabilistic interpretation plays no role in our game-theoretic framework.
What matters for the game is only that Scientists agree on these odds before betting.

\begin{table}
\begin{center}
\begin{tabular}{|l|c|c|c|c|}
    \hline
    \diagbox{$(\omega^A, \omega^B)$}{$(\theta^A, \theta^B)$} & $(1,3)$ & $(1,4)$ & $(2,3)$ & $(2,4)$ \\
    \hline
    $(+1,+1)$ & $\mu$ & $\nu$ & $\mu$ & $\mu$ \\
    \hline
    $(-1,-1)$ & $\mu$ & $\nu$ & $\mu$ & $\mu$ \\
    \hline
    $(+1,-1)$ & $\nu$ & $\mu$ & $\nu$ & $\nu$ \\
    \hline
    $(-1,+1)$ & $\nu$ & $\mu$ & $\nu$ & $\nu$ \\
    \hline
\end{tabular}
\caption{CHSH table specifying the odds $p(\omega^A, \omega^B \mid \theta^A, \theta^B)$ for each setting pair $(\theta^A, \theta^B)$ and outcome $(\omega^A, \omega^B)$, where $\mu := (2- \sqrt{2})/8$ and $\nu := (2+ \sqrt{2})/8$.} \label{table:CHSH_odds}
\end{center}
\end{table}

The key observation is that for each fixed measurement pair $(s, t) \in \Theta$, the distribution $p(\,\cdot, \cdot \mid s, t)$ is the element of $\mathcal{P}(\Omega^2)$, that is,
\[
\sum_{(a, b) \in \Omega^2} p(a, b \mid s, t) = 2 \mu + 2 \nu = 1.
\]
Thus, at the level of each fixed measurement pair $(s, t)$, the situation admits a fully classical probabilistic description and no inconsistency appears.

The difficulty arises only when one asks whether these four distributions can be realized simultaneously within a single hidden-variable model, that is, whether there exists a probability space $\Lambda$ and random variables $X_1,\dots,X_4 : \Lambda \to \Omega$ whose pairwise marginals reproduce $p(\,\cdot, \cdot \mid s, t)$ for all $(s,t)\in\Theta$.
Indeed, the existence of such a single joint distribution would imply the CHSH inequality 
\[
|C(1,3) - C(1,4) + C(2,3) + C(2,4)| \leq 2,
\]
where, 
\[ C(s,t):=\sum_{(a,b) \in\Omega^2} a b \cdot p(a, b \mid s, t) \]
denotes the correlation (see Appendix \ref{app:CHSH_summary} for details). 
However, the parameters $(\mu, \nu)$ specified in Table \ref{table:CHSH_odds} do not satisfy CHSH inequality.
In fact, the CHSH table yields
\[
C\left(1, 3 \right) = -C(1,4)=C(2,3)=C(2,4) = -\frac{1}{\sqrt{2}},
\]
and consequently
\[
C \left( 1, 3 \right) - C \left( 1, 4 \right) + C \left( 2, 3 \right) + C \left( 2, 4 \right) = -2\sqrt{2},
\]
which violates the CHSH bound.
Equivalently, there is no single underlying probability space (i.e., no joint distribution) whose marginals reproduce the CHSH table for all $(s,t)\in\Theta$ simultaneously.

\section{Game-theoretic analysis of CHSH experiments with and without loopholes}

In this section we recast two major loopholes in CHSH-type experiments, the locality loophole and the measurement-dependence loophole, as {\it structural} constraints in a hidden-variable game.
We start from a baseline game in which Nature is explicitly allowed to exploit the locality loophole, in the sense that Nature may choose the hidden variable after obtaining partial information from one wing of the experiment.
We then identify what additional restrictions on the timing and availability of information are needed to exclude both loopholes, and show that imposing these restrictions leads naturally to the loopholes-closed game studied in the main theorem.

\subsection{A hidden-variable game with the locality loophole}

We first address the locality loophole.
Our aim is to model a situation in which, when Nature selects the hidden variable \(\lambda\), information about A's measurement setting \(s\) and outcome \(\omega^{s}\), and possibly also B's setting, may already be available.
In such a scenario Nature can correlate \(\lambda\) with the settings and/or outcomes, thereby simulating correlations that would otherwise be ruled out by locality-based arguments.
Operationally, this corresponds to experimental situations in which the two measurements are not space-like separated (e.g., they are not effectively simultaneous), so that setting or outcome information can be transmitted between the two wings before \(\lambda\) is fixed.

To formalize this situation into a game-theoretic framework, we consider the following game.
We define, for any $a, b \in \Omega$, $s \in \Theta^A$, and $t \in \Theta^B$,
\begin{align*}
p(a \mid s) &:= \sum_{b' \in \Omega} p(a, b' \mid s, 3) =  \sum_{b' \in \Omega} p(a, b' \mid s, 4) = \frac{1}{2}, \\
p(b \mid a, s, t) &:= \frac{p(a, b \mid s, t)}{p(a \mid s)} = 2 p(a, b \mid s,t).
\end{align*}

\medskip
\begin{itembox}[l]{\bf Hidden-variable game with the locality loophole}
\textbf{Players} : Scientist A, Scientist B, Nature A, and Nature B. \\
\textbf{Protocol} : $F_0^{A} = F_0^B =  1$. \\
\hspace*{0.5em}FOR $n \in \mathbb{Z}_{>0}$ : \\
\hspace*{1em}Scientist A announces $\{q_n^A(\,\cdot\, \mid s ) \in \mathcal{P}(\Omega)\}_{s \in \Theta^A}$ and $s_n \in \Theta^A$. \\
\hspace*{1em}Nature A announces $\omega_n^{s_n} \in \Omega$. \\
\hspace*{1em}Scientist B announces $\{ q_n^B(\,\cdot\, \mid \omega_n^{s_n}, s_n, t) \in \mathcal{P}(\Omega) \}_{t \in \Theta^B}$ and $t_n \in \Theta^B$. \\
\hspace*{1em}Nature B chooses $\lambda_n \in X_{s_n}^{-1} (\{\omega_n^{s_n}\})$ and announces $\omega_n^{t_n} := X_{t_n} (\lambda_n)$. 
\smallskip \\
\hspace*{1em}$\displaystyle F_n^{A} := F_{n-1}^{A} \cdot \frac{ q_n^A(\omega_n^{s_n} \mid s_n)}{ p(\omega_n^{s_n} \mid s_n)}$,
\smallskip \\
\hspace*{1em}$\displaystyle F_n^{B} := F_{n-1}^{B} \cdot \frac{ q_n^B(\omega_n^{t_n} \mid \omega_n^{s_n}, s_n, t_n) }{ p(\omega_n^{t_n} \mid \omega_n^{s_n}, s_n, t_n)}$. \\
\hspace*{0.5em}END FOR.
\end{itembox}
\medskip

This game is intended to model the locality loophole as follows.
Scientist~A first specifies a betting distribution on \(\Omega\) for each setting \(s\), and then chooses the setting \(s_n\).
After observing A's move, Nature~A produces the corresponding outcome \(\omega_n^{s_n}\).
Next, Scientist~B specifies a betting distribution that may depend on the pair \((s_n,\omega_n^{s_n})\) and then chooses the setting \(t_n\).
Finally, Nature~B is allowed to use the information available from both wings---in particular, the setting and outcome on A's side---to select a hidden variable \(\lambda_n\) consistent with A's realized outcome, and to generate B's outcome via \(\omega_n^{t_n}=X_{t_n}(\lambda_n)\).

The key feature is that Nature~B may choose \(\lambda_n\) {\it after} learning \((s_n,\omega_n^{s_n})\).
This explicitly captures the operational content of the locality loophole: information can propagate between the two wings before the hidden variable is effectively fixed, allowing correlations that would be excluded under strict locality (space-like separation).

The following theorem holds.

\begin{thm} \label{thm:CHSH_LLN_with_commu_loophole}
In the hidden-variable game with the locality loophole,
Scientists A and B have a joint strategy that ensures $\lim_{n \to \infty} F_n^A = \infty$ or $\lim_{n \to \infty} F_n^B = \infty$ unless,
for any $s \in \Theta^A$, $t \in \Theta^B$ and $a, b \in \Omega$,
\[
\lim_{n \to \infty} \frac{ \sum_{i=1}^n \delta_{(a, b, s, t)} (\omega_i^{s_i}, \omega_i^{t_i}, s_i , t_i )}{ \sum_{i=1}^n \delta_{(s,t)} (s_i, t_i)} = p(a , b \mid s, t).
\]
\end{thm}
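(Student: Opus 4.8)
The plan is to reduce the statement to two independent applications of the game-theoretic law of large numbers (Theorem~\ref{SLLN:predictive-Shafer_and_Vovk}), one per capital process, by exploiting the chain-rule factorization
\[
p(a,b\mid s,t)=p(a\mid s)\,p(b\mid a,s,t)=\tfrac12\cdot p(b\mid a,s,t)
\]
that is built into the odds. As a first structural step I would have the Scientists fix the settings by a periodic schedule, letting $(s_n,t_n)$ cycle through the four pairs of $\Theta$ as a function of the trial index $n$ alone. This choice serves two purposes. It forces $\sum_{i=1}^n\delta_{(s,t)}(s_i,t_i)\to\infty$ for every $(s,t)$, so that all the conditional assertions of the theorem are non-vacuous; and, since the schedule depends only on $n$, it makes $t_n$ known to Scientist~A in advance even though A moves before $t_n$ is announced, so that A may effectively condition on the whole context $(s_n,t_n)$ rather than on $s_n$ alone.

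Next I would record a conditional form of Theorem~\ref{SLLN:predictive-Shafer_and_Vovk}. In a predictive game in which a predictable context $c_n$ in a finite set $\mathcal{C}$ is revealed before Reality's move in $\Omega$, with odds $p(\,\cdot\mid c_n)$, Skeptic can force, for every $c\in\mathcal{C}$ and every $a\in\Omega$, the convergence of the $c$-conditional empirical frequency of $a$ to $p(a\mid c)$ unless the capital diverges. This follows from Theorem~\ref{SLLN:predictive-Shafer_and_Vovk} by running, for each fixed $c$, the law-of-large-numbers strategy on the subsequence $\{i:c_i=c\}$ (betting the odds, i.e.\ neutrally with factor $1$, off that subsequence) and then taking the convex combination (average) of the $|\mathcal{C}|$ resulting capital processes: the average is realized by a single strategy and tends to $+\infty$ whenever any component does, because every component is positive.

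I would then apply this conditional version twice, writing $N_n(s,t)$, $N_n(a,s,t)$, $N_n(a,b,s,t)$ for the empirical counts occurring in the theorem. For $K^A$, I take contexts $(s,t)\in\Theta$, alphabet $\Omega$, and odds $p(\,\cdot\mid s)=\tfrac12$; since A can compute $(s_n,t_n)$ in advance, A's prescribed bets $\{q_n^A(\,\cdot\mid s)\}$ realize this strategy and force $N_n(a,s,t)/N_n(s,t)\to\tfrac12=p(a\mid s)$ for all $(a,s,t)$, unless $\lim_n K_n^A=\infty$. For $K^B$, I take contexts $(a,s,t)\in\Omega\times\Theta$, alphabet $\Omega$, and odds $p(\,\cdot\mid a,s,t)=p(b\mid a,s,t)=2p(a,b\mid s,t)$, which lie in $(0,1)$ for the tabulated $\mu,\nu$, forcing $N_n(a,b,s,t)/N_n(a,s,t)\to p(b\mid a,s,t)$ for all $(a,b,s,t)$, unless $\lim_n K_n^B=\infty$. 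The surjectivity of $\lambda\mapsto(X_1,\dots,X_4)(\lambda)$ onto $\Omega^4$ enters exactly here: given A's realized outcome $a$ and the settings $s,t$, there exist $\lambda\in X_s^{-1}(\{a\})$ with $X_t(\lambda)=+1$ and with $X_t(\lambda)=-1$, so Nature~B can still output either value of $\omega_n^{t_n}$, making B's outcome a genuinely free move in $\Omega$ to which Theorem~\ref{SLLN:predictive-Shafer_and_Vovk} applies verbatim.

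Finally I would combine the two conclusions. If neither capital diverges, both families of conditional frequencies converge; since A's limit $\tfrac12$ is nonzero and $N_n(s,t)\to\infty$, the intermediate counts $N_n(a,s,t)$ also tend to infinity, so B's ratios are eventually well defined, and the telescoping identity
\[
\frac{N_n(a,b,s,t)}{N_n(s,t)}=\frac{N_n(a,s,t)}{N_n(s,t)}\cdot\frac{N_n(a,b,s,t)}{N_n(a,s,t)}
\]
yields the limit $\tfrac12\cdot p(b\mid a,s,t)=p(a\mid s)\,p(b\mid a,s,t)=p(a,b\mid s,t)$. Contrapositively, if the joint conditional frequency fails to converge to $p(a,b\mid s,t)$ for some $(a,b,s,t)$, then at least one factor must fail, whence the corresponding capital diverges, giving $\lim_n K_n^A=\infty$ or $\lim_n K_n^B=\infty$. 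I expect the main obstacle to be the timing mismatch in the first application---A must condition on $(s,t)$ yet moves before $t_n$ is revealed---which is precisely what the predictable, outcome-independent scheduling of $t_n$ is designed to resolve; a secondary technical point is the passage from the single-context Theorem~\ref{SLLN:predictive-Shafer_and_Vovk} to its conditional form through the averaging construction, which also guarantees that it is an individual capital, and not merely the product $K_n^A K_n^B$, that diverges.
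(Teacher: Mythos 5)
Your proposal is correct, and at the architectural level it matches the paper's proof: both rest on the chain-rule factorization $p(a,b\mid s,t)=p(a\mid s)\,p(b\mid a,s,t)$ and on two applications of Theorem~\ref{SLLN:predictive-Shafer_and_Vovk}, one per capital process, each upgraded to a conditional, context-dependent form by the subsequence-plus-capital-averaging construction you describe (the paper uses this implicitly; you spell it out). The genuine difference is in how the gap between conditioning on $s$ and conditioning on $(s,t)$ is closed. The paper has Scientist A force only the $s$-conditional frequencies, $\sum_i\delta_{(a,s)}(\omega_i^{s_i},s_i)/\sum_i\delta_s(s_i)\to p(a\mid s)$, and then repairs the mismatch via \eqref{eqn:freq_cond_by_B}: Scientist B, who chooses $t_n$ \emph{after} observing $(s_n,\omega_n^{s_n})$, adopts a setting strategy (in effect, cycling $t$ within each realized class $(a,s)$) making the difference between the $s$-conditional and $(s,t)$-conditional frequencies of $a$ equal to $o(1)$, so that the product of the two forced limits coincides asymptotically with the joint conditional frequency. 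You instead fix a deterministic periodic schedule of settings, which makes $t_n$ predictable, lets Scientist A condition on the full context $(s_n,t_n)$, and allows the exact telescoping identity to replace the paper's $o(1)$-equalization step entirely. Your route is arguably cleaner: it needs no adaptive equalization argument, and the divergence of the denominators $\sum_i\delta_{(s,t)}(s_i,t_i)$ is automatic rather than assumed ``without loss of generality.'' The paper's route, in turn, exploits the information structure that defines the locality-loophole game (B reacting to Nature A's outcome) and keeps A's strategy aligned with the literal protocol, in which A's bets are indexed by $s$ alone. Both arguments ultimately rely on the same freedom, namely that the Scientists, not Nature, control the setting sequence; and your treatment of the side conditions (divergence of the intermediate counts from the nonzero limit $\tfrac12$, eventual well-definedness of B's ratios, and divergence of an individual capital rather than a product) is sound.
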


\begin{proof}
See Section \ref{proof_1}. 
\end{proof}

This theorem implies that Scientists have a joint betting strategy that allows them to make their capital diverge if Nature's move, namely, the empirical frequencies of outcomes, fails to converge to the CHSH table.
In the sense of Shafer and Vovk \cite{itsonlyagame}, Theorem~\ref{thm:CHSH_LLN_with_commu_loophole} states that if Nature, following the hidden-variable model, is allowed to choose $\lambda$ after observing both the measurement setting and the outcome on A's side, then Scientists can force Nature to reproduce empirical correlations consistent with the CHSH table.
From this perspective, the locality loophole is precisely the structural feature that prevents the CHSH inequality from emerging as a game-theoretic law of large numbers.
Similarly, one may construct a game where Nature can exploit the measurement-dependence loophole.
In that case, an analogous theorem holds for the corresponding hidden-variable game.

\subsection{A hidden-variable game with loopholes closed}

We next consider the game inspired by the CHSH experiment with loopholes closed.
By closing the currently considered loopholes, is Nature, following a hidden variable model, forced to exhibit behavior consistent with the CHSH table?

To address this question, we examine structural requirements to close the locality loophole and measurement-dependence loophole.
One naive game-theoretic method for closing the locality loophole is to adopt a protocol where, {\it after} Nature declares $\lambda$, Scientists determine the measurement setting $(s, t)$.
Experimentally, this can be viewed as analogous to a Bell-type delayed-choice experiment \cite{Xiao_Song_Ma, Chaves}, that is, a scheme in which the choice of measurement directions is delayed until after the particle is generated and enters the observation apparatus.
Thus, in this formulation, Nature is no longer the last player to act.
This seems somewhat conceptually incompatible with the usual role of Nature in game-theoretic probability, where Nature is modeled as the last player to move.
Furthermore, this formulation imposes additional requirements to close the measurement-dependence loophole, since the measurement settings might have correlations with $\lambda$.
This is also unsatisfactory from a game-theoretic probabilistic perspective.

To address these conceptual tensions, this paper reinterprets the formulation as ``the locality loophole does not occur if Nature behaves {\it as if} it does not anticipate the experimental outcome'' and embeds this into the game.
Likewise, the measurement-dependence loophole is resolved by having Nature behave ``{\it as if} it does not anticipate the measurement direction at all.''
In other words, rather than postulating a probabilistic independence like constraint between the hidden variable and the settings, we introduce an additional capital process into betting game, in which any deviation from such ``independence-like'' behavior allows the Scientists' capital to diverge.
Thus, Nature must act as though it cannot exploit information about outcomes or settings when declaring $\lambda$ in order to prevent Scientists from becoming infinitely rich.
Formalizing this operational constraint leads us to the ``loopholes-closed game.''

Before discussing the loopholes-closed game, we introduce several definitions that allow us to formulate an analogue of measurement independence in a game-theoretic, frequency-based manner.
Suppose that we are given sequences $\{\lambda_k \in \Lambda\}_{k=1}^n$ and $\{(s_k, t_k) \in \Theta \}_{k=1}^n$.
For $(\omega^1, \omega^2, \omega^3, \omega^4) \in \Omega^4$ and $(s, t) \in \Theta$, define
\begin{align*}
  U_n( \omega^1, \omega^2, \omega^3, \omega^4; s,t) 
  :=& \left| \left\{k \middle|
  \begin{alignedat}{2}
    \; &1 \leq k \leq n,\, s_k = s,\, t_k = t,
    \\ &(X_1(\lambda_k), X_2(\lambda_k), X_3(\lambda_k), X_4(\lambda_k)) = (\omega^1,\omega^2, \omega^3, \omega^4)
  \end{alignedat}
  \right\} \right|, \\
  V_n( \omega^1, \omega^2, \omega^3, \omega^4)
  :=& \left| \left\{k \middle|
  \begin{alignedat}{2}
    \; &1 \leq k \leq n,\,
    \\ &(X_1(\lambda_k), X_2(\lambda_k), X_3(\lambda_k), X_4(\lambda_k)) = (\omega^1,\omega^2, \omega^3, \omega^4)
  \end{alignedat}
  \right\} \right| \\
  =& \sum_{(s, t) \in \Theta} U_n(\omega^1, \omega^2, \omega^3, \omega^4; s,t), \\
  W_n(s, t)
  :=& | \{k \mid 1\leq k \leq n,\, s_k = s,\, t_k = t\} | \\
  =& \sum_{(\omega^1, \omega^2, \omega^3, \omega^4) \in \Omega^4} U_n(\omega^1, \omega^2, \omega^3, \omega^4; s,t).
\end{align*}
The quantity $U_n(\omega^1, \omega^2, \omega^3, \omega^4; s,t)$ counts how often the hidden variable assignments
produce the outcome quadruple $(\omega^1, \omega^2, \omega^3, \omega^4)$ under the measurement setting $(s,t)$.
The quantities $V_n(\omega^1, \omega^2, \omega^3, \omega^4)$ and $W_n(s,t)$ are the corresponding marginal counts.
We then define empirical distributions on $\Omega^4 \times \Theta$, $\Omega^4$, and $\Theta$ by
\begin{align*}
\hat{P}_n (\omega^1, \omega^2, \omega^3, \omega^4; s,t) &:= \frac{U_n(\omega^1, \omega^2, \omega^3, \omega^4; s,t)}{n}, \\
\hat{Q}_n (\omega^1, \omega^2, \omega^3, \omega^4) &:= \frac{V_n(\omega^1, \omega^2, \omega^3, \omega^4)}{n}, \\
\hat{R}_n (s,t) &:= \frac{W_n(s,t)}{n}. 
\end{align*}
In the usual probabilistic formulation of the CHSH inequality, measurement-independence assumption asserts that the distributions on $\Lambda$ and on $\Theta$ are independent in the measure-theoretic sense.
Under this assumption, one expects at least that,
\[
\lim_{n \to \infty} \left( \hat{P}_n -\hat{Q}_n \cdot \hat{R}_n \right) = 0 \quad \text{almost surely}.
\]
This observation motivates the construction of an additional capital process that tests, in an operational and sequential manner, whether such independence holds in the empirical data.

Then the following game provides a structural expression of closing the two loopholes within our framework.
Since there is no need to distinguish between Scientists A and B in this game, these two are combined into a single player, Scientist AB. The same applies to Nature AB.

\medskip
\begin{itembox}[l]{\bf Hidden-variable game with loopholes closed}
\textbf{Players} : Scientist AB, Nature AB. \\
\textbf{Protocol} : $F_0^{AB} = 1$, $I_0^{AB} = 1$. \\
\hspace*{0.5em}FOR $n \in \mathbb{Z}_{>0}$ : \\
\hspace*{1em}Scientist AB  announces $\{(q_n^{AB}(\,\cdot, \cdot \,\mid s, t ) ) \in \mathcal{P}(\Omega^2)\}_{(s, t) \in \Theta }$ and $(s_n, t_n) \in \Theta$.\\
\hspace*{1em}Nature AB announces $\lambda_n \in \Lambda$ and $\omega_n^s := X_s(\lambda_n)$ for $s \in \{1,2,3,4\}$.
\smallskip\\
\hspace*{1em}$\displaystyle F_n^{AB} := F_{n-1}^{AB} \cdot \frac{ q_n^{AB}(\omega_n^{s_n}, \omega_n^{t_n} \mid s_n, t_n) }{ p(\omega_n^{s_n}, \omega_n^{t_n} \mid s_n, t_n)}$.
\smallskip \\
\hspace*{1em}$\displaystyle I_n^{AB} := I_{n-1}^{AB} \cdot \frac{\hat{P}_n(\omega_n^1, \omega_n^2, \omega_n^3, \omega_n^4; s_n,t_n)}{\hat{Q}_n(\omega_n^1, \omega_n^2, \omega_n^3, \omega_n^4) \cdot \hat{R}_n(s_n,t_n)} $. \\
\hspace*{0.5em}END FOR.
\end{itembox}
\medskip

The second capital process $I_n^{AB}$ is designed to test, in a sequential manner, whether the empirical joint distribution on $\Omega^4 \times \Theta$ factorizes asymptotically into the product of its marginals.
It is defined recursively via a likelihood-ratio-type update between $\hat{P}_n$ and $\hat{Q}_n \cdot \hat{R}_n$, so that any systematic dependence between Nature's hidden-variable declarations and the chosen measurement settings leads to the divergence of $I_n^{AB}$.

Structurally, the present game differs from the locality-loophole game in only two respects.
First, Scientist AB introduces the additional capital process $I_n^{AB}$ in order to operationally enforce a measurement-independence-like constraint.
Second, by changing the timing of the announcements (relative to the locality-loophole game), we reformulate the dependence between the hidden variable $\lambda_n$ and the measurement settings $(s_n, t_n)$ from a purely temporal dependence into a dependence that is empirically testable through frequencies.
Although the timing of moves differs from that in the locality-loophole game, this does not restrict Nature's ability to generate outcome pairs $(\omega_n^{s_n}, \omega_n^{t_n})$ as functions of $(s_n, t_n)$.
The essential difference is that the loophole-closed game makes such dependence detectable within the game, which ultimately leads to the following incompatibility result.

\begin{thm}\label{thm:CHSH_holds}
    In the hidden-variable game with loopholes closed, Scientist AB has a strategy $q : \Omega^\ast \rightarrow \mathcal{P}(\Omega^2)$, that ensures $\lim_{n \to \infty} F_n^{AB} = \infty$ unless, for any $(s,t) \in \Theta$ and $a, b \in \Omega$,
    \begin{equation} \label{eqn:simulate_CHSH_table}
    \lim_{n \to \infty} \frac{ \sum_{i=1}^n \delta_{(a, b, s, t)} (\omega_i^{s_i}, \omega_i^{t_i}, s_i , t_i )}{ \sum_{i=1}^n \delta_{(s,t)} (s_i, t_i)} = p(a , b \mid s, t).
    \end{equation}
    In addition, Scientist AB's second capital process $I_n^{AB}$ diverges, $\lim_{n \to \infty} I_n^{AB} = \infty$, unless, for any $(\omega^1, \omega^2, \omega^3, \omega^4) \in \Omega^4$ and $(s, t) \in \Theta$,
    \begin{equation} \label{eqn:simulate_independence}
    \lim_{n \to \infty} \left\{ \hat{P}_n(\omega^1, \omega^2, \omega^3, \omega^4 ; s,t) -\hat{Q}_n(\omega^1, \omega^2, \omega^3, \omega^4) \cdot \hat{R}_n(s,t) \right\} = 0. 
    \end{equation}
    Furthermore, Nature AB cannot satisfy \eqref{eqn:simulate_CHSH_table} and \eqref{eqn:simulate_independence} simultaneously.
    Thus in this game, either $\lim_{n \to \infty} F_n^{AB} = \infty $ or $\lim_{n \to \infty} I_n^{AB} = \infty$ must occur.
\end{thm}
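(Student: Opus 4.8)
The plan is to prove the statement in three logically separate pieces — (i) the divergence criterion for $K_n^{AB}$, (ii) the divergence criterion for $W_n^{AB}$, and (iii) the incompatibility of \eqref{eqn:simulate_CHSH_table} and \eqref{eqn:simulate_independence} — and then assemble them by contraposition. For the assembly I would fix once and for all a Scientist strategy that chooses the settings $(s_n,t_n)$ in round-robin over the four elements of $\Theta$, so that $\hat R_n(s,t)\to 1/4$ for every $(s,t)$; this makes each setting occur infinitely often and keeps every denominator $\hat R_n(s,t)$ bounded away from $0$. Granting (i)--(iii), the final claim is immediate: if neither capital diverges, then the contrapositive of (i) gives \eqref{eqn:simulate_CHSH_table} and the contrapositive of (ii) gives \eqref{eqn:simulate_independence} along a suitable subsequence, and (iii) then produces a contradiction.

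For (i) I would reduce to the game-theoretic law of large numbers, Theorem~\ref{SLLN:predictive-Shafer_and_Vovk}, applied separately to each setting pair. Since $p(\,\cdot,\cdot\mid s,t)\in\mathcal{P}(\Omega^2)$ for each fixed $(s,t)$, the subsequence of trials on which $(s_i,t_i)=(s,t)$ is exactly a simple predictive game over the alphabet $\Omega^2$. Running one account per setting pair — each betting according to the LLN-forcing strategy on its own subsequence and neutrally (odds $p$, factor $1$) elsewhere — and taking the capital-weighted convex combination of the four accounts yields a single strategy $q$ whose capital is the average of the four account capitals. By Theorem~\ref{SLLN:predictive-Shafer_and_Vovk} the $(s,t)$-account diverges unless the conditional frequencies on the $(s,t)$-subsequence converge to $p(\,\cdot,\cdot\mid s,t)$; since divergence of any one account forces $K_n^{AB}\to\infty$, this is precisely \eqref{eqn:simulate_CHSH_table}, exactly as in the proof of Theorem~\ref{thm:CHSH_LLN_with_commu_loophole}.

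The analytic core is (ii), the behavior of the plug-in process $W_n^{AB}$. The key step is a combinatorial identity: for any finite alphabet and any observed sequence, the sequential product of empirical frequencies of the realized symbols collapses, because the running count of a fixed symbol sweeps through $1,2,\dots$; concretely $\prod_{i=1}^n\hat{P}_i(c_i)=\big(\prod_c T_n(c)!\big)/n!$, and likewise for $\hat Q$ and $\hat R$. Substituting these three identities into $\log W_n^{AB}=\sum_i\big(\log\hat P_i-\log\hat Q_i-\log\hat R_i\big)$ and applying Stirling's formula (the alphabets $\Omega^4\times\Theta$, $\Omega^4$, $\Theta$ are finite, so only boundedly many factorials occur) gives
\[
\log W_n^{AB}=n\,I_n+O(\log n),\qquad I_n:=H(\hat Q_n)+H(\hat R_n)-H(\hat P_n)\ge 0,
\]
where $I_n$ is the empirical mutual information between the hidden-variable quadruple and the setting. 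Since $H$ is continuous on the finite simplex, $I_n\to 0$ is equivalent to \eqref{eqn:simulate_independence} (via compactness in one direction and Pinsker's inequality in the other), and whenever \eqref{eqn:simulate_independence} fails one has $I_{n_k}\ge\delta>0$ along a subsequence, forcing $\log W_{n_k}^{AB}\ge n_k\delta-C\log n_k\to\infty$. I expect this step to be the main obstacle, both in proving the identity cleanly and in the delicate point that boundedness of $W_n^{AB}$ only forces $I_n\to 0$ \emph{along a subsequence}; fortunately the assembly below needs only this subsequential conclusion.

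Finally, (iii) is the conceptual crux linking the two tests to the CHSH violation. Assuming both \eqref{eqn:simulate_CHSH_table} and \eqref{eqn:simulate_independence} (the latter possibly only along a subsequence), I would pass to a further subsequence along which $\hat Q_n\to Q$ for some $Q\in\mathcal P(\Omega^4)$, using compactness of the finite-dimensional simplex. Rewriting \eqref{eqn:simulate_CHSH_table} as $\big(\sum_{\omega:\,\omega^s=a,\,\omega^t=b}\hat P_n(\omega;s,t)\big)/\hat R_n(s,t)\to p(a,b\mid s,t)$, substituting $\hat P_n(\omega;s,t)=\hat Q_n(\omega)\hat R_n(s,t)+o(1)$ from \eqref{eqn:simulate_independence}, and dividing by $\hat R_n(s,t)$ (bounded below by the round-robin choice) yields in the limit $p(a,b\mid s,t)=\sum_{\omega:\,\omega^s=a,\,\omega^t=b}Q(\omega)$ for every $(s,t)\in\Theta$ and $a,b\in\Omega$. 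Thus $Q$ is a single joint distribution on $(X_1,X_2,X_3,X_4)$ whose four pairwise marginals reproduce the CHSH table, which the computation $C(1,3)-C(1,4)+C(2,3)+C(2,4)=-2\sqrt2$ of Section~3 shows is impossible. This contradiction establishes (iii), and together with (i)--(ii) it shows that at least one of $K_n^{AB}$, $W_n^{AB}$ must diverge.
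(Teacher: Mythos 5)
Your proposal is correct, and its first two parts follow essentially the same route as the paper's proof. Part (i) is the same per-setting-pair application of Theorem~\ref{SLLN:predictive-Shafer_and_Vovk} (the paper leaves the combination of the four accounts implicit; your capital-weighted mixture spells it out), and part (ii) rests on exactly the factorial identity the paper uses, $W_n^{AB}= n!\,\prod_{\tau,u}T_n(\tau;u)!\,\big/\big(\prod_{\tau}T_n(\tau)!\,\prod_{u}T_n(u)!\big)$, followed by Stirling's formula; your empirical mutual information $I_n=H(\hat Q_n)+H(\hat R_n)-H(\hat P_n)$ is identical to the paper's $D\left(\hat P_n \middle\| \hat Q_n\cdot\hat R_n\right)$, and your equivalence claim is legitimate precisely because $\hat Q_n,\hat R_n$ are the marginals of $\hat P_n$. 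The genuine difference is part (iii). The paper never introduces a limiting distribution: it rewrites \eqref{eqn:simulate_independence} as $S_n(\omega^s,\omega^t)/n-S_n(\omega^s,\omega^t;s,t)/T_n(s,t)=o(1)$ and reruns the CHSH derivation directly on empirical counts, obtaining $|C_n(1,3)-C_n(1,4)+C_n(2,3)+C_n(2,4)|\le 2+o(1)$, which contradicts \eqref{eqn:simulate_CHSH_table}. You instead extract, by compactness of the finite simplex, a limit point $Q$ of $\hat Q_n$ along a subsequence, show that its pairwise marginals reproduce the CHSH table, and invoke the measure-theoretic no-go fact recalled in Section~3 and Appendix~B. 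Both arguments are sound: the paper's stays entirely at the level of finite-$n$ frequencies, which is more in the native spirit of game-theoretic probability, while yours is shorter and delegates the contradiction to the standard probabilistic CHSH inequality. One further point in your favor: the paper, like you, only proves $\limsup_n W_n^{AB}=\infty$ when \eqref{eqn:simulate_independence} fails, so its concluding disjunction, read literally with $\lim$, does not quite follow from its three parts; your assembly, which applies part (iii) along a subsequence on which $W_n^{AB}$ stays bounded, derives the final disjunction exactly as stated. Note only that this care does not upgrade the theorem's second standalone claim (divergence of $W_n^{AB}$ in the sense of $\lim$), which in both your proof and the paper's is established only in the $\limsup$ sense.
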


\begin{proof}
See Section \ref{proof_2}. 
\end{proof}

Theorem~\ref{thm:CHSH_holds} states, in game-theoretic terms, that Scientist~AB possesses a {\it winning strategy} in the hidden-variable game with loopholes closed.
More precisely, Scientist~AB can choose a strategy \(q:\Omega^\ast\to\mathcal{P}(\Omega^2)\) such that Nature~AB cannot keep both capital processes \(F_n^{AB}\) and \(I_n^{AB}\) bounded.
Scientists run two capital processes in parallel.
\begin{itemize}
  \item \(F_n^{AB}\) tests whether the empirical conditional frequencies converge to the CHSH table.
  \item \(I_n^{AB}\) tests whether the empirical distribution on \(\Omega^4\times\Theta\) factorizes asymptotically into the product of its marginal distributions on \(\Omega^4\) and on \(\Theta\), providing a frequency-based analogue of measurement independence.
\end{itemize}

Note that \eqref{eqn:simulate_independence} is weaker than the probabilistic independence assumptions usually imposed in the context of the CHSH inequality.
Even under this weaker requirement, Nature must either reproduce the CHSH table at the level of empirical conditional frequencies \eqref{eqn:simulate_CHSH_table} {\it or} exhibit setting-independence-like behavior, in the frequency sense \eqref{eqn:simulate_independence}, but it cannot achieve both simultaneously.
Thus, as \(n\to\infty\), at least one of \(F_n^{AB}\) or \(I_n^{AB}\) necessarily diverges.

At first sight this may seem at odds with the empirical situation: in actual CHSH experiments, Bell-type inequalities are violated in accordance with quantum-mechanical predictions.
The point, however, is that the theorem is not merely saying that ``a violation occurs.''
Rather, it clarifies what must {\it give way} within a local hidden-variable (LHV) picture once the relevant loopholes are controlled.
Physically, an LHV model can survive asymptotically only by abandoning one of the two: either it fails to reproduce the CHSH table, or it gives up measurement independence by correlating the setting choices with the hidden variable.

Operationally, the two capital processes function as sequential test statistics that diagnose compatibility with each requirement along the data stream.
Thus, in a loophole-closed setting, Scientists obtain an operational winning strategy: they inevitably detect a breakdown, either in reproducing the CHSH table at the level of empirical frequencies or in setting-independence-like behaviour, thereby excluding LHV explanations in a data-driven sense.

\section{Concluding remarks}
We revisited the CHSH inequality from an explicitly information-structural perspective, emphasizing that the validity of Bell-type constraints depends not only on algebraic relations among correlations but also on who can access which information, and when.
To make this timing and availability explicit, we worked in the framework of game-theoretic probability, in which statistical claims are expressed through the existence of betting strategies rather than by positing an underlying probability space.

Our main contribution was to formalize two representative loopholes, the locality loophole and the measurement-dependence (``freedom-of-choice'') loophole, as structural constraints in sequential games inspired by hidden-variable scenarios.
Within this framework, we designed a loopholes-closed game and constructed two capital processes.
One monitors whether empirical conditional frequencies converge to the CHSH table, while the other monitors whether, across trials, the association between Nature's internally fixed outcome assignments and the chosen measurement settings remains free of systematic dependence.
This second test provides an operational, frequency-based surrogate for the usual measurement-independence assumption.
The central theorem shows that Nature cannot keep both processes bounded.
Equivalently, at least one capital process must diverge, so that Nature inevitably loses in the sense of game-theoretic probability.
This provides an operational reading of loophole-free CHSH violations: when the structural conditions corresponding to locality and freedom of choice are suitably controlled, Scientists possess a winning strategy against any hidden-variable explanation attempting to satisfy both requirements simultaneously.

Several directions for future work appear natural.
First, in the present paper we tested independence only at the level of single-trial empirical frequencies, which yields a deliberately weak, operational notion of measurement independence.
It would be interesting to investigate how the CHSH constraint changes when one imposes stronger regularity conditions, for example by monitoring correlations across multiple trials or higher-order configuration frequencies.
Such extensions may correspond, in game-theoretic terms, to strengthening the independence requirement beyond simple factorization of single-trial frequencies.
Second, it would be interesting to reinterpret various variants of Bell inequalities within the framework of game-theoretic probability.
Such reinterpretations may reveal structural or operational aspects that are less transparent in standard probabilistic formulations.
Finally, the present framework suggests a systematic way to compare different notions of ``loophole closure'' by translating them into game constraints; clarifying the relations among such constraints may help sharpen the interpretation of experimental protocols.

\section*{Acknowledgements}

The present study was supported by JST ERATO Grant Number JPMJER2402 and JSPS KAKENHI Grant Number 23K25787.

\appendix
\section*{Appendix}
\setcounter{equation}{0}
\setcounter{thm}{0}
\setcounter{footnote}{1}
\addcontentsline{toc}{section}{Appendix}
\renewcommand{\thesubsection}{\Alph{subsection}}
\renewcommand{\thethm}{\Alph{subsection}.\arabic{thm}} 
\renewcommand{\theequation}{\Alph{subsection}.\arabic{equation}}

\subsection{Proofs of Theorems presented in Section 4}

\subsubsection{Proof of Theorem~\ref{thm:CHSH_LLN_with_commu_loophole}} \label{proof_1}

Using Theorem~\ref{SLLN:predictive-Shafer_and_Vovk}, Scientist A can force Nature A to realize the event
\[
\lim_{n \to \infty} \frac{ \sum_{i=1}^n \delta_{(a, s)} (\omega_i^{s_i}, s_i)}{ \sum_{i=1}^n \delta_s (s_i)} = p(a \mid s),
\]
for any $(a, s) \in \Omega \times \Theta^A$.
Similarly, for fixed $(a, s) \in \Omega \times \Theta^A$, again by Theorem~\ref{SLLN:predictive-Shafer_and_Vovk}, Scientist B can force Nature B to realize the event
\[
\lim_{n \to \infty} \frac{ \sum_{i=1}^n \delta_{(a, b, s, t)} (\omega_i^{s_i}, \omega_i^{t_i}, s_i, t_i)}{ \sum_{i=1}^n \delta_{(a, s, t)}(\omega_i^{s_i}, s_i, t_i)} = p(b \mid a,s,t),
\]
for any $(b, t) \in \Omega \times \Theta^B$.

We remark that the denominators are determined by the declarations of Scientists A and B.
Hence, without loss of generality, we may assume that they are eventually nonzero and diverge.
In particular, Scientists can ensure that $\sum_{i=1}^n \delta_{(a,s,t)} (\omega_i^{s_i}, s_i, t_i) \to \infty$.

Combining these results, it remains to show that
\begin{equation} \label{eqn:freq_cond_by_B}
\frac{ \sum_{i=1}^n \delta_{(a, s)} (\omega_i^{s_i}, s_i)}{ \sum_{i=1}^n \delta_s (s_i)} \cdot \frac{ \sum_{i=1}^n \delta_{(a, b, s, t)} (\omega_i^{s_i}, \omega_i^{t_i}, s_i, t_i)}{ \sum_{i=1}^n \delta_{(a, s, t)}(\omega_i^{s_i}, s_i, t_i)} 
- \frac{ \sum_{i=1}^n \delta_{(a, b, s, t)} (\omega_i^{s_i}, \omega_i^{t_i}, s_i , t_i )}{ \sum_{i=1}^n \delta_{(s,t)} (s_i, t_i)} =o(1).
\end{equation}
Since Scientist B announces $t_n \in \Theta^B$, Scientist B can adopt a strategy $(t_i)$ ensuring that
\[
\frac{ \sum_{i=1}^n \delta_{(a, s)} (\omega_i^{s_i}, s_i)}{ \sum_{i=1}^n \delta_s (s_i)} - \frac{ \sum_{i=1}^n \delta_{(a, s, t)} (\omega_i^{s_i}, s_i, t_i)}{ \sum_{i=1}^n \delta_{(s,t)} (s_i, t_i)}  = o(1).
\]
Thus, under this strategy, \eqref{eqn:freq_cond_by_B} holds in the limit.

\subsubsection{Proofs of Theorem~\ref{thm:CHSH_holds}} \label{proof_2}

We divide the proof into three parts.
First, as a direct consequence of Theorem~\ref{SLLN:predictive-Shafer_and_Vovk},
Scientist AB has a strategy $q : \Omega^\ast \rightarrow \mathcal{P}(\Omega)$, that ensures $\lim_{n \to \infty} F_n^{AB} = \infty$ unless, for all $(s, t) \in \Theta$ and $a, b \in \Omega$,
\[
\lim_{n \to \infty} \frac{ \sum_{i=1}^n \delta_{(a, b, s, t)} (\omega_i^{s_i}, \omega_i^{t_i}, s_i , t_i )}{ \sum_{i=1}^n \delta_{(s,t)} (s_i, t_i)} = p(a , b \mid s, t).
\]
We note that as in Theorem~\ref{thm:CHSH_LLN_with_commu_loophole}, this step is merely an application of Theorem~\ref{SLLN:predictive-Shafer_and_Vovk}, which forces Nature to conform to the CHSH table at the level of empirical frequencies.
The novelty of Theorem~\ref{thm:CHSH_holds} lies not in this enforcement itself, but in the fact that its consequence is incompatible with the boundedness of the capital process $I_n^{AB}$.

Second, we evaluate the capital process $I_n^{AB}$.
To simplify the symbols, we write $\tau = (\omega^1, \omega^2, \omega^3, \omega^4) \in \Omega^4$ and $u = (s, t) \in \Theta$.
Then by the recursion formula of $I_n^{AB}$,
\[
I_n^{AB} = \frac{ n! \cdot \prod_{\tau \in \Omega^4} \prod_{u \in \Theta} U_n(\tau ; u)! }{
\prod_{\tau \in \Omega^4} V_n(\tau)! \cdot \prod_{u \in \Theta} W_n(u)! }.
\]
Then, taking the logarithm, and applying Stirling's formula in the second equation,
\begin{align*}
\frac{\log I_n^{AB}}{n}
=& \frac{1}{n} \left\{ \log n! + \sum_{\tau \in \Omega^4} \sum_{u \in \Theta} \log U_n(\tau; u)! - \sum_{\tau \in \Omega^4} \log V_n(\tau)! -\sum_{u \in \Theta} \log W_n(u)!  \right\} \\
=& \log n + \sum_{\tau, u} \frac{U_n(\tau; u)}{n} \log U_n(\tau;u) \\
& - \sum_{\tau} \frac{V_n(\tau)}{n} \log V_n(\tau) - \sum_{u} \frac{W_n(u)}{n} \log W_n(u) + O\left(\frac{\log n}{n}\right) \\
=& \sum_{\tau, u} \frac{U_n(\tau; u)}{n} \left( \log \frac{U_n(\tau;u)}{n} - \log \frac{V_n(\tau)}{n} - \log \frac{W_n(u)}{n} \right) + O\left(\frac{\log n}{n}\right) \\
=& D\left( \hat{P}_n \left\|  \hat{Q}_n \cdot \hat{R}_n  \right.\right) + O\left(\frac{\log n}{n}\right),
\end{align*}
where $D(\,\cdot\, \| \,\cdot\,)$ denotes the Kullback-Leibler divergence between probability distributions on $\Omega^4 \times \Theta$.
Hence, we get that if
\[
 \hat{P}_n - \hat{Q}_n \cdot \hat{R}_n \nrightarrow 0,
\]
then $\limsup I_n^{AB} = \infty$.

Finally, we show that Nature AB cannot satisfy \eqref{eqn:simulate_CHSH_table} and \eqref{eqn:simulate_independence} simultaneously.
Indeed, under the condition \eqref{eqn:simulate_independence}, the asymptotic behavior of Nature AB necessarily satisfies the CHSH inequality, which contradicts \eqref{eqn:simulate_CHSH_table}, since the CHSH table violates the CHSH inequality.
We now make this implication explicit by deriving an asymptotic version of the CHSH inequality from \eqref{eqn:simulate_independence}.

The condition \eqref{eqn:simulate_independence} can be written by
\[
 \hat{P}_n - \hat{Q}_n \cdot \hat{R}_n = o(1).
\]
Hence, by adopting a strategy satisfying $\liminf \hat{R}_n >0$,
the asymptotic relation \eqref{eqn:simulate_independence} is equivalent to
\begin{align*}
 &\frac{U_n(\omega^1, \omega^2, \omega^3, \omega^4; s,t)}{W_n(s,t)} - \frac{V_n(\omega^1, \omega^2, \omega^3, \omega^4)}{n} \\
 &\ \ =
  \frac{\hat{P}_n(\omega^1, \omega^2, \omega^3, \omega^4;s,t)}{\hat{R}_n(s,t)} -\hat{Q_n}(\omega^1, \omega^2, \omega^3, \omega^4)
 = o(1).
 \end{align*}
Fixing $(s, t)$ and $(\omega^s, \omega^t)$, and summing over the remaining indices, we obtain
\begin{equation} \label{eqn:another_indep}
 \frac{\widetilde{U_n}( \omega^s, \omega^t ;s,t )}{W_n(s,t)} - \frac{\widetilde{V_n}(\omega^s, \omega^t)}{n} = o(1),
\end{equation}
where
\begin{align*}
    \widetilde{U_n} ( \omega^s, \omega^t ; s, t ) &:= | \{k \mid 1\leq k \leq n,\, X_s(\lambda_k) = \omega^s,\, X_t(\lambda_k) = \omega^t,\, s_k = s,\, t_k = t \} |, \\
    \widetilde{V_n} ( \omega^s, \omega^t) &:= | \{k \mid 1\leq k \leq n,\, X_s(\lambda_k) = \omega^s,\, X_t(\lambda_k) = \omega^t\} |.
\end{align*}

By using \eqref{eqn:another_indep}, we prove that the behavior of Nature AB asymptotically satisfies the CHSH inequality.
The following computation parallels the standard derivation of the CHSH inequality, with empirical frequencies replacing probabilities.
Define the empirical correlation function by
\[
C_n(s, t) := \sum_{(\omega^s, \omega^t) \in \Omega^2} \omega^s \omega^t \cdot \frac{\widetilde{U_n}(\omega^s, \omega^t ; s, t )}{W_n(s,t)}.
\]
Then
\begin{align*}
   &C_n(1, 3) - C_n(1,4) \\
   & = \sum_{(\omega^1, \omega^3) \in \Omega^2} \omega^1 \omega^3 \cdot \frac{\widetilde{U_n}(\omega^1, \omega^3 ; 1, 3)}{W_n(1, 3)} - \sum_{(\omega^1, \omega^4) \in \Omega^2} \omega^1 \omega^4 \cdot \frac{\widetilde{U_n}(\omega^1, \omega^4 ; 1, 4)}{W_n(1, 4)} \\
   & = \sum_{(\omega^1, \omega^3) \in \Omega^2} \omega^1 \omega^3 \cdot \frac{\widetilde{V_n}(\omega^1, \omega^3)}{n} - \sum_{(\omega^1, \omega^4) \in \Omega^2} \omega^1 \omega^4 \cdot \frac{\widetilde{V_n}(\omega^1, \omega^4 )}{n} + o(1) \\
   & = \sum_{(\omega^1,\omega^2, \omega^3, \omega^4) \in \Omega^4} \omega^1 \omega^3 \cdot \frac{V_n(\omega^1, \omega^2, \omega^3, \omega^4)}{n} - \sum_{(\omega^1,\omega^2, \omega^3, \omega^4) \in \Omega^4} \omega^1 \omega^4 \cdot \frac{V_n(\omega^1,\omega^2,\omega^3, \omega^4)}{n} + o(1) \\
   &= \sum_{(\omega^1,\omega^2, \omega^3, \omega^4) \in \Omega^4} \omega^1 \omega^3 (1 \pm \omega^2 \omega^4) \cdot \frac{V_n(\omega^1, \omega^2, \omega^3, \omega^4)}{n} \\
   &\ \ - \sum_{(\omega^1,\omega^2, \omega^3, \omega^4) \in \Omega^4} \omega^1 \omega^4 (1 \pm \omega^2 \omega^3) \cdot \frac{V_n(\omega^1,\omega^2,\omega^3, \omega^4)}{n} + o(1),
\end{align*}
where the same choice of sign is taken throughout.
Hence
\begin{align*}
    &| C_n(1, 3) - C_n(1,4) | \\
    &\leq \sum_{(\omega^1,\omega^2, \omega^3, \omega^4) \in \Omega^4} |\omega^1 \omega^3 | (1 \pm \omega^2 \omega^4) \cdot \frac{V_n( \omega^1, \omega^2, \omega^3, \omega^4)}{n} \\
    &\ \ + \sum_{(\omega^1,\omega^2, \omega^3, \omega^4) \in \Omega^4} |\omega^1 \omega^4 | (1 \pm \omega^2 \omega^3) \cdot \frac{V_n( \omega^1, \omega^2, \omega^3, \omega^4)}{n} + o(1) \\
    &= \sum_{(\omega^1,\omega^2, \omega^3, \omega^4) \in \Omega^4} (1 \pm \omega^2 \omega^4) \cdot \frac{V_n( \omega^1, \omega^2, \omega^3, \omega^4)}{n} \\
    &\ \ + \sum_{(\omega^1,\omega^2, \omega^3, \omega^4) \in \Omega^4} (1 \pm \omega^2 \omega^3) \cdot \frac{V_n( \omega^1, \omega^2, \omega^3, \omega^4)}{n} + o(1) \\
    &= \sum_{(\omega^2, \omega^4) \in \Omega^2} (1 \pm \omega^2 \omega^4) \cdot \frac{\widetilde{V_n}(\omega^2, \omega^4)}{n}
    + \sum_{(\omega^2, \omega^3) \in \Omega^2} (1 \pm \omega^2 \omega^3) \cdot \frac{\widetilde{V_n}(\omega^2,\omega^3)}{n} + o(1)\\
    &= \sum_{(\omega^2, \omega^4) \in \Omega^2} (1 \pm \omega^2 \omega^4) \cdot \frac{\widetilde{U_n}(\omega^2, \omega^4 ; 2,4)}{W_n(2,4)}
    + \sum_{(\omega^2, \omega^3) \in \Omega^2} (1 \pm \omega^2 \omega^3) \cdot \frac{\widetilde{U_n}(\omega^2,\omega^3 ;2,3)}{W_n(2,3)} + o(1) \\
    & = 2 \pm \left\{ C_n(2,4) + C_n(2,3) \right\} + o(1).
\end{align*}
Thus we conclude that
\begin{align*}
    |C_n(1,3) - C_n(1,4) + C_n(2,3) + C_n(2,4)| \leq 2 + o(1).
\end{align*}
This completes the proof.

\subsection{CHSH inequality and its assumption} \label{app:CHSH_summary}
This section summarizes the CHSH inequality and related topics for the reader's convenience and clarifies terminology usage.
First, we give the mathematical formulation and proof of the CHSH inequality.
Next, we explain how this assumption is interpreted in physical implementations.
We finally recall how quantum mechanics gives rise to correlations that violate the inequality.

We first introduce the CHSH inequality in a measure-theoretic setting.
Let $X_s$, for $s \in \{1,2,3,4\}$, be a real-valued random variables on a probability space $(\Lambda, \mathcal{F}, P)$.

For $s, t \in \{1,2,3,4\}$, define the correlations
\[
C(s,t) := \mathbb{E}[X_s \cdot X_t] = \int_\Lambda X_s(\lambda) X_t(\lambda) dP(\lambda),
\]
then the following inequality holds.

\begin{thm}[CHSH inequality]
Assume that, for any $s \in \{1,2,3,4\}$, $|X_s| \leq 1$ almost surely.
Then
\[
|S| \leq 2,
\]
where
\[
S := C(1,3) - C(1,4) + C(2,3) + C(2,4).
\]
\end{thm}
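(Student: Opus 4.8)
The plan is to collapse the four-term CHSH combination into the expectation of a single random variable and then bound that variable pointwise. First I would invoke linearity of expectation to write
\[
S = \mathbb{E}\bigl[ X_1(X_3 - X_4) + X_2(X_3 + X_4) \bigr],
\]
where I have grouped the cross terms so that $X_1$ and $X_2$ each appear once. Setting $Z := X_1(X_3 - X_4) + X_2(X_3 + X_4)$, it then suffices to control $Z$, since $|S| = |\mathbb{E}[Z]| \le \mathbb{E}[|Z|]$ by Jensen's inequality (equivalently, by monotonicity of the integral).

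Next I would establish the pointwise estimate $|Z| \le 2$ almost surely. Fix $\lambda$ in the full-measure event on which $|X_s(\lambda)| \le 1$ for every $s$. Using $|X_1(\lambda)| \le 1$ and $|X_2(\lambda)| \le 1$ together with the triangle inequality gives
\[
|Z(\lambda)| \le |X_3(\lambda) - X_4(\lambda)| + |X_3(\lambda) + X_4(\lambda)|.
\]
The key elementary fact is the identity $|a - b| + |a + b| = 2\max(|a|, |b|)$, valid for all real $a, b$; applying it with $a = X_3(\lambda)$ and $b = X_4(\lambda)$, both bounded by $1$ in absolute value, yields $|Z(\lambda)| \le 2\max(|X_3(\lambda)|, |X_4(\lambda)|) \le 2$. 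Integrating this almost-sure bound gives $\mathbb{E}[|Z|] \le 2$, and hence $|S| \le 2$, as claimed.

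The proof is essentially routine once the grouping $X_1(X_3 - X_4) + X_2(X_3 + X_4)$ is chosen, so there is no genuine analytic obstacle. The only point requiring mild care is that the hypothesis $|X_s| \le 1$ holds almost surely rather than everywhere; I would therefore confine all pointwise manipulations to the full-measure event on which all four bounds hold simultaneously, which is harmless since expectations are insensitive to null sets. The single conceptual ingredient is recognizing that the alternating-sign structure of the CHSH combination forces one of the factors $X_3 \mp X_4$ to be small exactly when the other is large; this complementarity is precisely what the $\max$ identity encodes, and it is what keeps the bound at $2$ rather than $4$.
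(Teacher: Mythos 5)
Your proof is correct, but it takes a genuinely different route from the paper. You collapse all four correlations into a single random variable $Z := X_1(X_3-X_4)+X_2(X_3+X_4)$, establish the pointwise bound $|Z|\le 2$ a.s.\ via the identity $|a-b|+|a+b|=2\max(|a|,|b|)$, and integrate. The paper instead follows the original CHSH-style argument: it inserts the factors $(1\pm X_2X_4)$ and $(1\pm X_2X_3)$ into $C(1,3)-C(1,4)$, uses their a.s.\ nonnegativity to obtain
\[
|C(1,3)-C(1,4)| \le 2 \pm \bigl\{ C(2,4)+C(2,3) \bigr\},
\]
and then reads off the two-sided bound on $S$. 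Your argument is more compact and arguably more transparent: one algebraic identity does all the work, and the complementarity between $|X_3-X_4|$ and $|X_3+X_4|$ makes it visible why the bound is $2$ rather than $4$. The paper's choice of the $(1\pm X_2X_4)$ insertion is not accidental, though: the identical manipulation, with empirical frequencies $T_n(\omega^1,\omega^2,\omega^3,\omega^4)/n$ replacing expectations and $(1\pm\omega^2\omega^4)$ replacing $(1\pm X_2X_4)$, is reused verbatim in the proof of the paper's main game-theoretic result (Theorem 4.2), so the measure-theoretic proof in the appendix is written to run in parallel with that frequency-based derivation. Both proofs are complete and standard; yours buys brevity, the paper's buys structural reuse. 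One minor bookkeeping point in your write-up, which you already handle correctly: integrability of $Z$ is automatic since all $|X_s|\le 1$ a.s., and restricting to the full-measure event where all four bounds hold simultaneously is indeed harmless.
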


\begin{proof}
By the definition,
\begin{align*}
C(1,3) - C(1,4) &= \mathbb{E}[X_1 X_3] - \mathbb{E}[X_1 X_4] \\
& = \mathbb{E}[X_1 X_3 \cdot (1 \pm X_2 X_4)] - \mathbb{E}[X_1 X_4 \cdot (1 \pm X_2 X_3)],
\end{align*}
where the same choice of sign is taken throughout.
Due to the assumptions that $1 \pm X_2 X_4, 1\pm X_2 X_3 \geq 0$ a.s. $P$,
\begin{align*}
    |C(1,3) - C(1,4)|
    &\leq \mathbb{E}[ |X_1 X_3| \cdot (1 \pm X_2 X_4)] + \mathbb{E}[|X_1 X_4| \cdot (1 \pm X_2 X_3)] \\
    & \leq 2 \pm \left\{ C(2, 4) + C(2,3) \right\}.
\end{align*}
Hence we get
\begin{align*}
C(1,3) - C(1,4) &\leq 2 - \left\{ C(2, 4) + C(2,3) \right\}, \text{ and } \\
C(1,3) - C(1,4) & \geq -2 - \left\{ C(2, 4) + C(2,3) \right\}.
\end{align*}
Thus we conclude that
\[
|S| \leq 2.
\]
\end{proof}

In the measure-theoretic formulation above, the fact that the random variables $X_1, \dots , X_4$ are defined on a single probability space is simply a structural assumption on the model.
When these variables are interpreted as outcomes of measurements labeled by $s \in \{1,2,3,4\}$, this structural assumption is reinterpreted as the requirement that all four outcome values are simultaneously defined when $\lambda$ is fixed, including those for settings that are not actually chosen in a given run of the experiment.
These hypothetically assigned values are often referred to as counterfactual outcomes, in the sense that they correspond to outcomes of measurements that could have been performed but were not.
In this sense, a hidden-variable model assumes that every experimental run carries predetermined values for all possible measurement settings, even though only one pair of them is observed.

On the other hand, within quantum mechanics, one can design experiments that violate the CHSH inequality.
Let $\mathcal{H}_A = \mathcal{H}_B = \mathbb{C}^2$ and let $\rho$ be a maximally entangled pure state on $\mathcal{H}_A \otimes \mathcal{H}_B$.
For a standard choice of four local dichotomic observables corresponding to measurement settings on $\mathcal H_A$ and $\mathcal H_B$ that maximize the CHSH violation, one obtains
\[
C(1, 3)- C(1, 4) + C(2 , 3) + C(2 , 4) = -2\sqrt{2}.
\]
The resulting joint outcome probabilities are consistent with the CHSH table used in the main text.
For further details on Bell's theorem and the CHSH inequality in the quantum-mechanical setting, see \cite{Nielsen_Chuang}.


\end{document}